\newtheorem{theorem}{Theorem}[section]
\newtheorem{lemma}[theorem]{Lemma}
\newtheorem{proposition}[theorem]{Proposition}
\newtheorem{definition}[theorem]{Definition}
\newtheorem{problem}[theorem]{Problem}
\newtheorem{protocol}[theorem]{Protocol}
\begin{document}

\title{On key exchange protocol based on Two-side multiplication action}

\author[1]{ Otero Sanchez, Alvaro\thanks{Autor de correspondencia: \texttt{aos073@ual.es}}}
\affil[1]{Department of Mathematics, University of Almería, 04120, Almería, Spain}

\maketitle
\begin{abstract}
We present a cryptanalysis of a two key exchange protocol based on two-side multiplication action. For this purpose, we use the properties of the algebraic structures to obtain a linear system whose solution enable to provide an efficient attack that recovers the shared secret key from publicly exchanged information for any instance of the digital semiring and twisted group ring in polynomial time. 
\end{abstract}
The modern cryptosystems appears with the fundational work of Diffie-Hellman \cite{diffie1976new}, with foundations upon the difficulty of the discrete logarithm problem in cyclic groups, especially over elliptic curves. Nevertheless, advanced in quantum computing threat the secutity of those protocols. In particular, in 1994, Shor introduced a quantum algorithm capable of efficiently solving this problem. This marked the start of the research in post-quantum cryptography, dedicated to constructing cryptographic schemes resilient to quantum attacks.

As a possible solution,  Maze et al. \cite{maze2007} proposed a key exchange protocol using  semigroup. This papers is the beginning for cryptography based on semiring, as in that paper a finite congruence simple semiring is proposed as a possible algebraic framework. These algorithm can be interpreted as a generalization of classical protocols such as the Diffie-Hellman \cite{diffie1976new} and ElGamal \cite{elgamal1985public} protocols, but instead of a cyclic group, they use abelian semigroups. Their framework has inspired various cryptographic developments. For example, Kahrobaei and Koupparis \cite{kahrobaei2016group} explored the use of non-abelian group actions, pushing the original idea into the realm of non-commutative algebra. Similarly, Gnilke and Zumbrägel \cite{gnilke2024cryptographic} connected these concepts with recent progress in isogeny-based cryptography. Another extension can be found in the work of Torrecillas - Olvera - Lopez \cite{Olvera19}, who applied twisted group rings to design new key exchange mechanisms. However, the instance proposed in \cite{maze2007} has been shown recently to be vulnerable to cryptanalysis \cite{otero2024cryptanalysis}. Here, we present a generalization of the last one and show how to apply it to different algebraic structures. 

In another line of research, Grigoriev and Shpilrain investigated the use of tropical semirings as a foundation for public key cryptography, including both key exchange protocols \cite{Grigoriev2013, Grigoriev2019} and digital signature schemes \cite{Chen2024}. Nevertheless, Kotov and Ushakov \cite{Kotov2018} introduced a heuristic algorithm known as the Kotov-Ushakov attack which has become a standard method of cryptanalysis in tropical cryptography. However, that algorithm is based on the solution of an minimal set cover problem, which is one of Karp’s 21 prob-
lems shown to be NP-complete in 1972. Recently, Otero et all \cite{sanchez2024} recently proposed a deterministic alternative that avoids NP-problems and find a solution in polynomial time. Moreover, other schemes based on tropical algebra by Grigoriev and Shpilrain have also been shown to lack security, as demonstrated in several studies \cite{Muanalifah2022, Isaac2021, Rudy2020}.

In \cite{Kahrobaei2013}, the authors propose the use of a group ring to perform a key exchange protocol. This idea has been generalized in other works, such as in \cite{Olvera19}, where the action is modified to a two-sided action similar to that of \cite{maze2007}, and the group ring is twisted by a 2-cocycle with the dihedral group. Another example is presented in \cite{deLaCruz2024}, where the subspaces on which the two-sided action is performed and the twist are altered.

However, the latter approach was cryptanalyzed recently in \cite{Tinani2024}, where the authors reduce the two-sided problem to a system of equations over circulant matrices, using specific equations that arise when the base group is the dihedral group. They provide a probabilistic solution, where the attacker must find an element by random sampling under certain conditions. Nevertheless, the security of this approach in the context of a different group or a new twist remains an open question. In this paper, we introduce a novel approach that allows for a comprehensive cryptanalysis of all such cases.

More recently, new directions have emerged leveraging semiring structures for practical cryptographic applications. In particular, Nassr et al. proposed a public-key encryption scheme grounded on the hardness of the two-sided digital circulant matrix action problem over a semiring initially introduced by Huang et al. \cite{nassr2025}.  

Between the time our preprint was made publicly available and the final publication of our article, a paper was published presenting a cryptanalysis of the protocol based on digital numbers \cite{ponmaheshkumar2025}. However, as the authors of that work acknowledge in their introduction, our results precede the publication of their article. Moreover, the approach by which the solutions are obtained differs significantly from theirs and is both distinct and independent. In addition, we have added the finite ring section to show how our method is different and able to analyze a wide range of cryptosystems

In this paper, we will show how a similar method can be applied to cryptoanalyce different protocols based on two side multiplication, providing as example the protocol over digital semiring of \cite{nassr2025} and other over twisted group ring as in \cite{Olvera19}.

\section{General attack against two side action} \label{sectio2}
Modern example of two side action on cryptogrpahy are base in the original paper \cite{maze2007}, In that paper, the following general setting is presented

\begin{definition}
Let $S$ be a semiring. A left $S$-semimodule is a commutative monoid $(Mo, +, 0_M)$ equipped with a scalar multiplication $S \times Mo \to Mo$, denoted $(s, m) \mapsto sm$, such that for all $s, t \in S$ and all $m, n \in Mo$, the following axioms hold:

\begin{enumerate}
  \item $s(m + n) = sm + sn$ \hfill (left distributivity)
  \item $(s + t)m = sm + tm$ \hfill (right distributivity)
  \item $(st)m = s(tm)$ \hfill (associativity)
  \item $1_S m = m$ \hfill (identity)
  \item $0_S m = s 0_M = 0_M$ \hfill (annihilation)
\end{enumerate}

If $S$ is a semiring with multiplicative identity $1_S$, then $1_S$ acts as the identity on $Mo$.

A right $S$-semimodule is defined analogously, with scalar multiplication $Mo \times S \to Mo$.
\end{definition}

The key exchange protocol prensented is 
\begin{protocol} \label{KeyExchange}
    Let $S$ be a semiring.  Alice and Bob agree  $M \in S$ and two commutative set $T_1,T_2\subset S$.
    \begin{enumerate}
        \item Alice chose $(A_1,A_2)\in T_1 \times T_2$ and makes public $pk_1=A_1 M  A_2$
        \item Bob chose $(B_1,B_2)\in T_1 \times T_2$ and makes public $pk_1=B_1 M  B_2$
        \item Alice computes $A_1 pk_2 A_2$ and Bob computes $B_1 pk_1  B_2$
        \end{enumerate}
        The common key is 
        \[
        A_1  pk_2   A_2 = A_1  B_1   M   B_2   A_2 = B_1   A_1   M   A_2   B_2 = B_1  pk_1   B_2
        \]
\end{protocol}

Those key exchange protocol are so called two-side action key exchange. In some papers, as in \cite{maze2007}, $T_i=C[M_i]=\{ \sum_{j=0}^m r_j M_i^j ; r_j \in Z(S)\}$ with $Z(S)$ the center of the semiring $S$. Other commutative sets are the circulant matrix, as in \cite{nassr2025}. 

In general, the security of such protocol relies on the following problems

\begin{problem}[SAP]
Let $S$ be a semiring. Let \( (A_1, A_2) \in T_1 \times T_2 \) with $T_1,T_2$ commutative sets of $S$, and let \( U = A_1  M  A_2 \) for an arbitrary element \( M \in S \). Given \( U \) and \( M \), the challenge is to obtain two  elements \( (A_1', A_2') \in T_1 \times T_2 \) such that
\[
U = A_1' M A_2'.
\]
\end{problem}

\begin{problem}[Diffie-hellman Problem over semiring]
Let $S$ be a semiring. Let \( (A_1, A_2),(B_1, B_2) \in T_1 \times T_2 \) be elements with $T_1,T_2$ commutative sets of $S$, and \( U = A_1 M  A_2 \), and \( V = B_1 M  B_2 \) for an arbitrary element \( M \in S\). Given \( U \), \( V \), and \( M \), the challenge is to obtain
\[
K = A_1  B_1  M  B_2  A_2.
\]
\end{problem}

\begin{problem}[Decisional problem over semiring]
Let $S$ be a semiring. Given \( M, U \in S \), do they exist two elements \( (A_1, A_2) \in T_1 \times T_2 \) such that
\[
U = A_1  M  A_2
\]
\end{problem}

Now, we will suppose that the sets $T_1, T_2$ have a system of generators $\{L_i^j\}_{i=1}^{m_j} \subset T_j$ over the center of $S$, $j=1,2$. Also, suppose that there exists an algorithm to find a solution to the system $\sum_{i=1}^N a_i H_i = Y$ with $Y,H_i\in S, \forall i = 1,\cdots N$ $a_i \in Z(S) \forall i =1,\cdots N$. Under this situation, we will solve Diffie-hellman Problem over semiring.

Let \( (A_1, A_2),(B_1, B_2) \in T_1 \times T_2 \), and \( U = A_1 M  A_2 \), and \( V = B_1 M  B_2 \) for an arbitrary element \( M \in S \). Then, $A_1,A_2$ can be written as
\begin{equation}
    A_j = \sum_{i=1}^{n_1} c_i^j L_i^j
\end{equation}
for $c_i^j\in Z(S)$ unknown $j=1,2$. Then, we have that
\begin{equation}
    A_1 M A_2 = \left(\sum_{i=1}^{n_1} c_i^1 L_i^1\right) M \left(\sum_{i=1}^{n_2} c_i^2 L_i^2\right) = \sum_{i=1}^{n_1} \sum_{j=1}^{n_2}c_i^1 c_j^2 L_i^1 M L_j^2
\end{equation}

The solution $c_i^1c_j^2$ are particular solution of following system
\begin{equation}
    A_1 M A_2 = \sum_{i=1}^{n_1} \sum_{j=1}^{n_2}z_{ij}L_i^1 M L_j^2
\end{equation}

However, any solution of that system make unsafe the protocol, as in \cite{otero2024cryptanalysis} we can perform the following identity

\begin{align}
   \sum_{i=1}^{n_1} \sum_{j=1}^{n_2}z_{ij}L_i^1 V L_j^2 & = \sum_{i=1}^{n_1} \sum_{j=1}^{n_2}z_{ij}L_i^1 B_1 M L_2 L_j^2 \\ & = \sum_{i=1}^{n_1} \sum_{j=1}^{n_2}z_{ij} B_1L_i^1 M L_j^2  B_2 \\ & = B_1 \left(\sum_{i=1}^{n_1} \sum_{j=1}^{n_2}z_{ij} L_i^1 M L_j^2 \right)  B_2 \\ & = B_1 A_1 M A_2 B_2 
\end{align}
which is the private key.

Therefore, the security of such protocol relies on the dificulty of finding a solution of the system $A_1 M A_2=\sum_{i=1}^{n_1} \sum_{j=1}^{n_2}z_{ij}L_i^1 M L_j^2$

\section{Cryptoanalysis of some protocols}
In this section we will present some example of two-side multiplication key exchange that are not safe due to this approach. 

\subsection{Digital semiring}
We will introduce some basic background on tropical semiring as well as digital semiring

In \cite{huang2024}, a new additively semiring is proposed, which they call the digits semiring.
\begin{definition}
    Let $W=\mathbb{N}\cup \{\infty\}$ and for all $g\in \mathbb{N}$, let $\delta(g)$ be the sum of all digits of $g$. The digits semiring is the semiring $(W,\oplus, \otimes)$ with
    \[
g_1 \oplus g_2 =
\begin{cases}
g_1 & \text{if } \delta(g_2) < \delta(g_1), \\
g_2 & \text{if } \delta(g_2) > \delta(g_1), \\
\max(g_1, g_2) & \text{if } \delta(g_1) = \delta(g_2),
\end{cases}
\]

\[
g_1 \otimes g_2 =
\begin{cases}
g_1 & \text{if } \delta(g_1) < \delta(g_2), \\
g_2 & \text{if } \delta(g_1) > \delta(g_2), \\
\min(g_1, g_2) & \text{if } \delta(g_1) = \delta(g_2),
\end{cases}
\]
\end{definition}
To differenciate the natural order of $\mathbb{N}$ and $W$ induced by addition, we will note $\leq_N$ the natural order of numbers, and $\leq_W$ the one given in $W$. Note that all additively idempotent semiring $R$ have an induced order by $a\leq_R b$ if and only if $a + b = b$.

Over all semiring we can define the semiring of matrix with coefficients in such semiring.

\begin{definition}
Let \( R \) be a semiring. Then the set of squared matrix over $R$, \( Mat_n(R) \), is a semiring with the usual operations
\begin{itemize}
    \item   \( (A\oplus B)_{ij} = A_{ij}\oplus B_{ij} \),
    \item   \(  (A\otimes B)_{ij} = \bigoplus_{k=1}^{n} A_{ik} \otimes B_{kj} \),
\end{itemize}
\end{definition}
Note that if $R$ is additively idempotent, then so is \( Mat_n(R) \)

\begin{definition}
    Let $R$ be a semiring. A matrix $C\in \rm{Mat}_n(R)$ is called circulant if there are $c_0,c_1,\cdots, c_{n-1}\in R$ such that
    \[
C=\begin{pmatrix}
c_0 & c_{n-1} & c_{n-2} & \cdots & c_1 \\
c_1 & c_0 & c_{n-1} & \cdots & c_2 \\
c_2 & c_1 & c_0 & \cdots & c_3 \\
\vdots & \vdots & \vdots & \ddots & \vdots \\
c_{n-1} & c_{n-2} & c_{n-3} & \cdots & c_0
\end{pmatrix}
\]
We will denote $C$ as $C=Circ(c_0,\cdots, c_{n-1})$
\end{definition}

A famous result regarding the structure of circulant matrix is
\begin{theorem}
    The set $Circ_n(R)$ of ciruclar matrix of $n\times n$ over $R$ form a commutative subsemiring of $\rm{Mat}_n(R)$. 
\end{theorem}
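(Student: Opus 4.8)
The statement is the classical fact that circulant matrices over a (commutative) semiring are closed under the semiring operations and commute with one another, adapted to the semiring setting; the plan is to prove it directly from the entrywise description. The workhorse of the whole argument is the reformulation: writing indices modulo $n$, the matrix $C=Circ(c_0,\dots,c_{n-1})$ displayed above is exactly the matrix with $C_{ij}=c_{(i-j)\bmod n}$. Everything follows by manipulating this single formula.

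First I would dispatch additive closure and the distinguished elements. Since $(A\oplus B)_{ij}=A_{ij}\oplus B_{ij}$, the sum of $Circ(c_0,\dots,c_{n-1})$ and $Circ(d_0,\dots,d_{n-1})$ has $(i,j)$-entry $c_{(i-j)\bmod n}\oplus d_{(i-j)\bmod n}$, hence equals $Circ(c_0\oplus d_0,\dots,c_{n-1}\oplus d_{n-1})$, which is again circulant. Moreover the zero matrix of $\mathrm{Mat}_n(R)$ is $Circ(0_R,\dots,0_R)$ and the identity matrix is $Circ(1_R,0_R,\dots,0_R)$, so both lie in $Circ_n(R)$.

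The heart of the proof is multiplicative closure. For $C=Circ(c_0,\dots,c_{n-1})$ and $D=Circ(d_0,\dots,d_{n-1})$ I would compute, using the definition of $\otimes$ on $\mathrm{Mat}_n(R)$,
\[
(C\otimes D)_{ij}=\bigoplus_{k=1}^{n} C_{ik}\otimes D_{kj}=\bigoplus_{k=1}^{n} c_{(i-k)\bmod n}\, d_{(k-j)\bmod n},
\]
and then reindex the sum by the bijection $k\mapsto t=i-k$ of residues modulo $n$, obtaining $\bigoplus_{t} c_{t\bmod n}\, d_{(i-j-t)\bmod n}$. The crucial observation is that this expression depends on $i$ and $j$ only through $(i-j)\bmod n$, so $C\otimes D=Circ(e_0,\dots,e_{n-1})$ with $e_m=\bigoplus_t c_{t\bmod n}\, d_{(m-t)\bmod n}$, a cyclic convolution of the two defining vectors. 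This simultaneously shows $Circ_n(R)$ is closed under $\otimes$ and yields an explicit formula for the product. Associativity of $\otimes$ and the two distributive laws are then inherited for free from $\mathrm{Mat}_n(R)$, so $Circ_n(R)$ is a subsemiring.

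It remains to check commutativity, and this is the one place a hypothesis on $R$ is genuinely used: $R$ must be commutative — which holds for the tropical and digital semirings under consideration, both defined by symmetric operations. Granting it, reindexing $e_m=\bigoplus_t c_t d_{m-t}$ by $t\mapsto s=m-t$ gives $\bigoplus_s c_{m-s} d_s=\bigoplus_s d_s c_{m-s}$, which is precisely the $m$-th coefficient of $D\otimes C$; hence $C\otimes D=D\otimes C$. I expect the only real pitfalls to be the bookkeeping with modular indices and remembering that $R$'s commutativity is needed only for this final step (closure under both operations holds over an arbitrary semiring). An alternative, more structural route is to set $P=Circ(0_R,1_R,0_R,\dots,0_R)$, prove by induction that $P^k$ is the circulant with a single $1_R$ in slot $k$ and that $P^n=I$, write each circulant as $\bigoplus_{k=0}^{n-1}(c_k I)\,P^k$, and push the scalar matrices $c_k I$ past powers of $P$ — legitimate exactly because the $c_k$ are central when $R$ is commutative.
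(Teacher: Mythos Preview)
Your argument is correct: the index formula $C_{ij}=c_{(i-j)\bmod n}$ is right, the convolution computation for $(C\otimes D)_{ij}$ goes through as written, and the commutativity step correctly isolates the need for $R$ to be commutative. The paper, however, does not prove this theorem at all --- it is stated as a ``famous result'' and immediately followed by the next protocol, with no proof given. So there is nothing to compare; you have supplied a full proof where the paper simply cites the fact.
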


In \cite{huang2024}, the following key exchange protocol is proposed

\begin{protocol} \label{KeyExchange}
    Let $W$ be the digital semiring. Alice and Bob agree matrix $M \in \rm{Mat}_n(W)$.
    \begin{enumerate}
        \item Alice chose $A_1,A_2\in Circ_n(\mathbb{W)}$ and makes public $pk_1=A_1 \otimes M \otimes A_2$
        \item Bob chose $B_1,B_2\in Circ_n(\mathbb{W)}$ and makes public $pk_1=B_1 \otimes M \otimes B_2$
        \item Alice computes $A_1\otimes pk_2 \otimes A_2$ and Bob computes $B_1\otimes pk_1 \otimes B_2$
        \end{enumerate}
        The common key is 
        \[
        A_1\otimes pk_2 \otimes A_2 = A_1\otimes B_1 \otimes M \otimes B_2 \otimes A_2 = B_1 \otimes A_1 \otimes M \otimes A_2 \otimes B_2 = B_1\otimes pk_1 \otimes B_2
        \]
\end{protocol}
The security of this protocol is based on the following problem

\begin{problem}[MAP {\cite{huang2024}}]
Let \( A_1, A_2 \in M_n(G) \) be two circulant matrices, and let \( U = A_1 \otimes M \otimes A_2 \) for an arbitrary matrix \( M \in M_n(G) \). Given \( U \) and \( T \), the challenge is to obtain two circulant matrices \( A_1' \) and \( A_2' \) such that
\[
U = A_1 \otimes M \otimes A_2.
\]
\end{problem}

In \cite{huang2024}, it is shown that MAP can be transformed into the problem of solving quadratic polynomial systems on the semiring \( (W, \oplus,  \otimes) \), which is proven to be an NP-problem

\begin{proposition}[\cite{huang2024}]
MAP can be transformed to the problem of solving quadratic polynomial systems on the digits semiring.
\end{proposition}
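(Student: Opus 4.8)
The plan is to make the reduction explicit by parametrising the unknown circulant matrices and expanding the product $A_1 \otimes M \otimes A_2$ entry by entry. Write $A_1 = Circ(x_0, x_1, \ldots, x_{n-1})$ and $A_2 = Circ(y_0, y_1, \ldots, y_{n-1})$, where $x_0, \ldots, x_{n-1}, y_0, \ldots, y_{n-1}$ are $2n$ indeterminates ranging over $W$. By the definition of a circulant matrix one has $(A_1)_{ik} = x_{(k-i)\bmod n}$ and $(A_2)_{lj} = y_{(j-l)\bmod n}$, so the matrix product formula yields, for every pair of indices $(i,j)$,
\[
(A_1 \otimes M \otimes A_2)_{ij} = \bigoplus_{k=1}^{n} \bigoplus_{l=1}^{n} x_{(k-i)\bmod n} \otimes M_{kl} \otimes y_{(j-l)\bmod n}.
\]
Each summand $x_{a} \otimes M_{kl} \otimes y_{b}$ is a monomial of degree two in the indeterminates, with the fixed entry $M_{kl}\in W$ playing the role of a constant coefficient; hence the right-hand side is a quadratic polynomial in $x_0, \ldots, x_{n-1}, y_0, \ldots, y_{n-1}$ over $(W,\oplus,\otimes)$. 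Thus the matrix equation $U = A_1 \otimes M \otimes A_2$ becomes the system of $n^2$ quadratic polynomial equations
\[
U_{ij} = \bigoplus_{k=1}^{n} \bigoplus_{l=1}^{n} x_{(k-i)\bmod n} \otimes M_{kl} \otimes y_{(j-l)\bmod n}, \qquad 1 \le i,j \le n,
\]
in those $2n$ unknowns.

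Next I would verify the equivalence in both directions. If $(A_1',A_2')$ solves MAP, then reading off the generators of $A_1'$ and $A_2'$ produces a point of $W^{2n}$ satisfying all $n^2$ equations; conversely, any solution $(x_0, \ldots, x_{n-1}, y_0, \ldots, y_{n-1}) \in W^{2n}$ reassembles, via the circulant construction, into a pair of circulant matrices with $U = A_1' \otimes M \otimes A_2'$, which is exactly a MAP solution. Since the system has $n^2$ equations, $2n$ variables, and each polynomial is a $\oplus$-sum of at most $n^2$ degree-two monomials, the whole transformation is computable in time polynomial in $n$ and in the bit sizes of the entries of $M$ and $U$, so it is a genuine reduction.

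The remaining work is bookkeeping rather than conceptual. First, one must fix once and for all the indexing convention for $Circ(\cdot)$ and track the composition of the two index shifts correctly through the double sum; a different convention merely permutes the monomials but has to be stated consistently. Second, one should make precise the notion of a \emph{polynomial over a semiring} — a formal $\oplus$-combination of $\otimes$-monomials evaluated in $W$ — so that the phrase ``quadratic polynomial system'' is literally accurate. I expect the only (mild) obstacle to be convincing the reader that no hidden higher-degree interaction is created by the digit operations $\oplus,\otimes$; but this is immediate, since matrix multiplication uses each of $A_1$ and $A_2$ exactly once, so every monomial is linear in the $x$'s and linear in the $y$'s, hence of total degree two, with $M$ contributing only constants.
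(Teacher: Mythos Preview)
Your argument is correct and is the natural way to establish the reduction: parametrise the two circulants by $2n$ unknowns, expand $(A_1\otimes M\otimes A_2)_{ij}$ as a $\oplus$-sum of monomials $x_a\otimes M_{kl}\otimes y_b$, and observe that each such monomial is bilinear in the $x$'s and $y$'s, hence of total degree two over $(W,\oplus,\otimes)$. The equivalence of solution sets and the polynomial size of the transformation are handled cleanly.

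There is, however, nothing to compare against in the present paper. The proposition is stated with attribution to \cite{huang2024} and is not proved here; the authors quote it only to recall that the original designers reduced MAP to a quadratic system (and claimed NP-hardness on that basis), before moving on to their own linearisation via the change of variables $z_{ij}=c_i^1c_j^2$. So your write-up supplies a proof that the paper simply imports from elsewhere. The only cosmetic point worth fixing is the circulant indexing: with the paper's convention $C=Circ(c_0,\ldots,c_{n-1})$ one has $C_{ij}=c_{(i-j)\bmod n}$ rather than $c_{(j-i)\bmod n}$, but as you yourself note this only permutes the monomials and does not affect the argument.
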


Finally, we introduce the concept of maximal solution,
\begin{definition}
Let $R$ be an additively idempotent semiring, and let $XA=Y$ be a linear system of equations. We say that $\hat{X}$ is the maximal solution of the system if and only if the two following conditions are satisfied 
    \begin{enumerate}
        \item $\hat{X}\in R^n$ is a solution of the system, i.e. $\hat{X}A=Y$,
        \item if $Z\in R^n$  is any other solution of the system, then $Z+\hat{X}=\hat{X}$.
    \end{enumerate}
\noindent This last condition is equivalent to $Z\leq \hat{X}$.
    
\end{definition}

In \cite{sanchez2024} a new method to solve linear equations over additively idempotent semiring is proposed, as well as its cryptographic applications. In \cite{nassr2025} they assert that the cryptoanalysis on such paper can not be used against \ref{KeyExchange}, as the private keys do not come from tropical polynomials of matrices. We will present a modification of that cryptanalysis that can be used with circulant matrix.

First, let $C_i = C[e_i]$ with $e_i$ the $i-$th vector of the canonical base. Then, we have that
\[C[a_1,a_2,\cdots, a_n]= a_1 C_1 \oplus a_2C_2\cdots \oplus a_n C_n\]
and therefore they form a commutative basis. As a result, we have to solve 

\begin{equation}
    pk_1= \bigoplus_{i,j=1}^n z_{ij} C_i \otimes M \otimes C_j
\end{equation}

To solve the previous linear system, we must note that
\begin{lemma}
    Let $a,b\in W$, then $a\otimes b \leq  a,b$, where order is based the natural order of additively idempotent semiring.
\end{lemma}
\begin{proof}
    w.l.o.g we can assume $a\leq_W b$. We have that $a\leq_W b$ means that $\delta(a)\leq_N \delta(b)$ or $\delta(a) = \delta(b)$  and $a \leq_{N} b$. In both cases, $a\otimes b = a \leq _W b $. 
\end{proof}

In \cite{sanchez2024} the following characterization of maximal solution is presented.

\begin{theorem} \label{SolucionSistemaGeneral}
Given $(R,+,\cdot)$ an additively idempotent semiring, let $T_i = \{ x \in R : x H_i + Y = Y \}$ $\forall i =1,\dots,n$. Suppose that these subsets have a maximum with respect to the order induced in $R$

    \begin{equation*}
        C_i = \max T_i.
    \end{equation*}

\noindent If $XH = Y$ has a solution, then $Z=(C_1,\dots C_n)$ is the maximal solution of the system.
\end{theorem}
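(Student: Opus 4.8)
The plan is to verify the two defining conditions of a maximal solution separately, with the easy direction (maximality among solutions) feeding into the substantive direction (that $Z$ is itself a solution). Throughout, $XH=Y$ is read as $\sum_{i=1}^n x_iH_i=Y$ with $X=(x_1,\dots,x_n)$, and I will first record the elementary order facts valid in any additively idempotent semiring $R$: the relation $a\le b \iff a+b=b$ is a partial order (reflexivity is idempotency; antisymmetry and transitivity are one-line checks); addition is monotone, i.e. $a\le b$ and $c\le d$ imply $a+c\le b+d$, since $a\le b$ gives $(a+c)+(b+c)=(a+b)+(c+c)=b+c$; multiplication is monotone on each side, i.e. $a\le b$ implies $ax\le bx$ and $xa\le xb$, because $(ax)+(bx)=(a+b)x=bx$; and finitely many elements each $\le Y$ have sum $\le Y$, since $a\le Y$, $b\le Y$ give $(a+b)+Y=a+(b+Y)=a+Y=Y$.

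Next I would do the maximality step. Let $W=(w_1,\dots,w_n)$ be an arbitrary solution, so $\sum_j w_jH_j=Y$. Fixing $i$, additive idempotency applied to the $i$-th summand gives $w_iH_i+Y=w_iH_i+\sum_j w_jH_j=\sum_j w_jH_j=Y$, so $w_i\in T_i$ and hence $w_i\le C_i$ by definition of $C_i=\max T_i$. Thus $W\le Z$ componentwise, which is precisely condition (2) of the definition of maximal solution — \emph{once} we know that $Z$ is a solution.

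Finally, the substantive step: $Z$ is a solution, and here the hypothesis that $XH=Y$ has \emph{some} solution $W$ is used crucially. On one hand $C_i\in T_i$ means $C_iH_i\le Y$ for every $i$, so by the last order fact $\sum_i C_iH_i\le Y$. On the other hand the maximality step gives $w_i\le C_i$, whence monotonicity of multiplication yields $w_iH_i\le C_iH_i$, and summing gives $Y=\sum_i w_iH_i\le\sum_i C_iH_i$. Antisymmetry of $\le$ then forces $\sum_i C_iH_i=Y$, so $Z$ is a solution; together with the previous paragraph it is the maximal one.

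The main obstacle I anticipate is not a hard argument but keeping the ambient setting clean: in the intended application the $H_i$ are matrices and the unknowns act by scalar (semimodule) multiplication, so the statement is really meant for $H_i,Y$ in a left or right $R$-semimodule $\mathcal M$ whose induced additive order defines the $T_i$; once monotonicity of the scalar action on $\mathcal M$ is noted, the proof transfers verbatim. A secondary point of care is that we only assumed each $T_i$ has a maximum, not that $R$ (or $\mathcal M$) is a complete lattice, so the argument must avoid infinite suprema — which it does, since everything reduces to the single fixed solution $W$ and the $n$ elements $C_i$.
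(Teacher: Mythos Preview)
Your proof is correct. The two-step structure — first showing every solution $W$ satisfies $w_i\in T_i$ (hence $W\le Z$), then sandwiching $\sum_i C_iH_i$ between $Y$ and $Y$ via monotonicity and antisymmetry — is exactly the standard argument for this residuation-type result, and your bookkeeping of the order axioms is clean.

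As for comparison: the present paper does not actually prove this theorem. It is quoted verbatim from \cite{sanchez2024} (introduced by ``In \cite{sanchez2024} the following characterization of maximal solution is presented''), so there is no in-paper proof to compare against. Your argument is the expected one and matches what one finds in that source.

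Your closing remark about the semimodule setting is well taken: in the application the $H_i$ lie in $\mathrm{Mat}_n(W)$ and the unknowns $z_{ij}$ in the base semiring $W$, so the relevant monotonicity is that of the scalar action, which follows from distributivity exactly as you indicate.
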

Note that if $\delta(a) \leq \delta(y)$, then $x\otimes a \leq_W y$ for all $x\in W$, and $\delta(a) > \delta(y)$ then $\delta(x) \leq \delta(y)$. As a result, $T_i = W$ if $H_i \leq_W \{ y_j ; \delta(y_j) \leq \delta (h_{ij})\}$, and $x\leq_W \min_W Y$ in other case, with $\min_W$ the min respect to the order in $W$. Therefore, 
    \[
\max T_i =
\begin{cases}
\infty & \text{if } H_i \leq_W Y,  \\
\min_W Y & \text{if } other
\end{cases}
\]
As there must be a solution of the system, provided that $K=A_1MA_2$, so we can compute a solution $d_{ij}$ of $K= \oplus_{i,j=1}^n z_{ij} C_i \otimes M \otimes C_j$.

\subsection{Finite ring}
There are some key exchange protocol where the semiring proposed is indeed a ring, as in the case of \cite{Olvera19}, \cite{deLaCruz2024}. The last one was cryptoanalyced by \cite{Tinani2024}, but only with the use of the specific relations of the group and the commutative sets, that yield in an algorithm that is not replicable in other cases. In fact, the security of the general case was still an open problem. 

\begin{definition}
    Let $G$ a non abelian semigroup, $T\subset G$ a subset and let $a\in T$. The adjoint is a map $(-)^*T:\longrightarrow G$ such that
    \begin{equation}
        a \cdot b^*  = b\cdot a^*  \text{  } \forall a,b \in T
    \end{equation}
\end{definition}

In \cite{Olvera19}, the authors introduce the following generalization of the clasical diffie-hellman based on group action, 

\begin{protocol}
    Let $S$ be a finite set, $G$ be a non-abelian semigroup, and $\varphi$ a $G$-action on $S$, and a public element $h \in S$. The extended Diffie–Hellman key exchange in $(G, S, \varphi)$ is the following protocol:

\begin{enumerate}
  \item Alice chooses $a \in G$ and computes $\varphi(a, h)$. Alice’s private key is $a$, and her public key is $pk_A = \varphi(a, h)$.
  \item Bob chooses $b \in G$ and computes $\varphi(b, h)$. Bob’s private key is $b$, and his public key is $pk_B = \varphi(b, h)$.
  \item Their common secret key is then
  \[
    \varphi(a^*, pk_B) = \varphi(a^*, \varphi(b, h)) = \varphi(a^*b, h) = \varphi(b^* a, h) = \varphi(b^*, \varphi(a, h)) = \varphi(b^*, pk_A),
  \]
\end{enumerate}
\end{protocol}

As a semigroup, they will use the multiplicative semigroup of a twised group ring. To present this algebraic structure, we recall the definition of $2-$cocycle

\begin{definition}
Let $G$ be a group and $A$ be an abelian group. An application
\[
\alpha : G \times G \rightarrow A
\]
is a \textbf{2-cocycle} if:

\begin{enumerate}
  \item $\alpha(g, 1) = \alpha(1, g) = 1$, for all $g \in G$,
  \item $\alpha(g, h)\alpha(gh, k) = \alpha(g, hk)\alpha(h, k)$, for all $g, h, k \in G$.
\end{enumerate}
\end{definition}

\begin{definition}
    Let $K$ be a ring, $G$ a group and $\alpha : G \times G \longrightarrow U(K)$ a $2$-cocycle to the units of $K$. Then the twisted group ring $K^\alpha G$ is the set of $K-$vector space spaned by $G$ with multiplication 
    \begin{equation}
        (a\overline{g} )( b \overline{h}) = ab\alpha(g,h) \overline{gh} \text{ } \forall a,b \in K, \overline{g}, \overline{h} \in G
    \end{equation}
    and expanded by linearity.
\end{definition}

We need the previous lemma.
\begin{lemma}
    Let $K$ be a commutative ring, let $T_1 = K C$ with $C\subset Z(G)$ and $\alpha(g,h)=\alpha(h,g)  \forall g,h\in C$. Then $T_1$ is a commutative ring
\end{lemma}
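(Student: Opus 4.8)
The plan is to exhibit $T_1 = KC$ as a commutative subring of the twisted group ring $K^\alpha G$. Once it is identified as a subring, associativity, distributivity, and the existence of an identity come for free by restriction, so the only genuine content is closure under multiplication and commutativity.

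First I would check that $T_1$ is closed under the operations of $K^\alpha G$. By construction $T_1$ is the $K$-submodule spanned by $\{\overline g : g \in C\}$, hence an additive subgroup. For multiplicative closure, take basis elements $a\overline g$ and $b\overline h$ with $a,b \in K$ and $g,h \in C$; the defining relation of $K^\alpha G$ gives $(a\overline g)(b\overline h) = ab\,\alpha(g,h)\,\overline{gh}$, and since $C$ is closed under the group operation (being a submonoid of $Z(G)$) we have $gh \in C$, so the product lies in $T_1$; extending by bilinearity shows $T_1 \cdot T_1 \subseteq T_1$. Because $1_G \in C$ and $\alpha(g,1)=\alpha(1,g)=1$, the identity $\overline{1_G}$ of $K^\alpha G$ lies in $T_1$. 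Thus $T_1$ is a subring.

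Next I would prove commutativity. By $K$-bilinearity of the product it suffices to compare $(a\overline g)(b\overline h)$ and $(b\overline h)(a\overline g)$ for $g,h \in C$. The first equals $ab\,\alpha(g,h)\,\overline{gh}$ and the second equals $ba\,\alpha(h,g)\,\overline{hg}$. Now $ab = ba$ because $K$ is commutative; $\alpha(g,h) = \alpha(h,g)$ by the standing hypothesis on $C$; and $gh = hg$ because $g,h \in Z(G)$. Hence the two products coincide, and $T_1$ is commutative.

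The hard part will be essentially nil here; it is a matter of bookkeeping. The two points to watch are that the symmetry hypothesis $\alpha(g,h)=\alpha(h,g)$ is only available on $C \times C$, which is precisely where it is invoked, and that multiplicative closure of $T_1$ rests on $C$ being a submonoid (subgroup) of $Z(G)$ rather than an arbitrary subset; granting that, the lemma follows directly from the multiplication rule of the twisted group ring together with the commutativity of $K$.
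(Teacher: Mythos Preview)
Your proof is correct and the commutativity computation is exactly the paper's: reduce by bilinearity to basis elements and use $ab=ba$, $\alpha(g,h)=\alpha(h,g)$, and $gh=hg$ in $Z(G)$. You are in fact more careful than the paper, which omits the subring verification and the point that closure requires $C$ to be a submonoid rather than an arbitrary subset.
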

\begin{proof}
It is enough to prove the commutativity for $a\overline{g}, b\overline{h} \in T_1$
\begin{equation}
    (a\overline{g})  (b\overline{h}) = ab \alpha(g,h) \overline{gh} = ba \alpha(h,g) \overline{hg} = (b\overline{h})(a\overline{g})
    \end{equation}
\end{proof}

Now, let $T_1 = K C$ as in the previous lemma, and let $T_2=K H$ with $H \subset G$ such that and adjoint $(-)^* : T_2 \longrightarrow K^\alpha G$ exist. Then the key exchange protocol will be 

\begin{protocol}
    Let $K^\alpha G$ be twisted group ring, and $h\in K^\alpha G$

\begin{enumerate}
  \item Alice chooses $(a,b) \in T_1 \times T_2$ and computes $ahb$. Alice’s private key is $(a,b)$, and her public key is $pk_A = ahb$.
  \item Bob  chooses $(c,d) \in T_1 \times T_2$ and computes $chd$. Bob’s private key is $(c,d)$, and her public key is $pk_B = chd$.
  \item Alice compute $ap_B b^*$ and Bob $cp_A d^*$ 

Their common secret key is then
  \[
    apk_B b^*= a c h d b^* = ca h bd^* = c pk_A d^*
  \]
\end{enumerate}
\end{protocol}

To start the cryptanalysis, recall that all finite ring $R$ is a $\mathbb{Z}_p$- vector space for some $p | ch (R)$. 

Let $\{v_1, \cdots, v_n\}$ a base of $K$ as a $\mathbb{Z_p}$-vector space. Then $\{v_ig ; i=1,\cdots , n, g \in C\}, \{v_ig ; i=1,\cdots , n, g \in H\}$ are the bases of $T_1$, $T_2$, respectively. To apply the general attack defined in previous sections, we must take into account the adjoint. For this, if we find a solution 

\begin{equation}
    ahb = \sum_{i=1}^{n_1} \sum_{j=1}^{n_2}z_{ij}L_i^1 h L_j^2
\end{equation}
with $\{L_ij\}_j$ a basis of $T_i$ for $i=1,2$, then

\begin{align*}
    \sum_{i=1}^{n_1} \sum_{j=1}^{n_2}z_{ij}L_i^1 pk_B (L_j^2)^* & = 
\sum_{i=1}^{n_1} \sum_{j=1}^{n_2}z_{ij}L_i^1 chd (L_j^2)^* \\&= 
\sum_{i=1}^{n_1} \sum_{j=1}^{n_2}z_{ij} c L_i^1 h L_j^2 d^* \\&= 
c \left(\sum_{i=1}^{n_1} \sum_{j=1}^{n_2}z_{ij} L_i1 h L_j2 \right)  d \\&=  c a h b d^*
\end{align*}
which is the private key.

To ilustrate the previous algorithm, in the case proposed in \cite{Olvera19}, they propose a finite field $K$, its primitive root of unity $t$, and the dihedral group of $2m$ elements,
\[
D_{2m} = \langle x, y \mid x^m = y^2 = 1,\ yx^a = x^{m-a}y \rangle.
\]
Then $R = K^{\alpha} D_{2m}$, where $\alpha$ is the $2$-cocycle
\[
\alpha : D_{2m} \times D_{2m} \rightarrow K
\]
defined by:
\[
\alpha(x^i, x^j y^k) = 1,\quad \alpha(x^i y, x^j y^k) = t^{j}, \quad \text{for } i, j = 1, \ldots, 2m - 1,
\]
is a twisted group ring.

\begin{definition}
    Let $R = K^{\alpha} D_{2m}$, where $t$ is a primitive root of unity that generates $K$ and $\alpha$ is the $2$-cocycle defined above. Given $h \in R$,

\[
h = \sum_{\substack{0 \leq i \leq m-1 \\ k=0,1}} r_i x^i y^k,
\]

where $r_i \in K$ and $x, y \in D_{2m}$. Then we define $h^* \in K^{\alpha} D_{2m}$ as:

\[
h^* = \sum_{\substack{0 \leq i \leq m-1 \\ k=0,1}} r_i t^{-i} x^i y^k,
\]

where $r_i \in K$ and $x, y \in D_m$.

\end{definition}

If we denote $C_m$ as the cyclic subgroup of $D_{2m}$ of order $m$, then $R = K^\alpha D_{2m}$ can be written as
\[
R = R_1 \oplus R_2,
\]
where $R_1 = K C_m$ and $R_2 = K^\alpha C_m y$. In this context, they can define $A_j \leq R_j$ as
\[
A_j = \left\{ \sum_{i=0}^{m-1} r_i x^i y^k \in R_j : r_i = r_{m - i} \right\}.
\]

\begin{proposition}
Given $h_1, h_2 \in R$,
\begin{itemize}
    \item If $h_1, h_2 \in R_1$, then $h_1 h_2 = h_2 h_1$;
    \item If $h_1, h_2 \in A_2$, then $h_1 h_2^* = h_2 h_1^*$, and $h_1^* h_2 = h_2^* h_1$;
    \item If $h_1 \in A_1$, $h_2 \in A_2$, then $h_1 h_2 = h_2 h_1^*$.
\end{itemize}
\end{proposition}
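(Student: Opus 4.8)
The plan is to establish all three identities by a direct computation in the $K$-basis $\{\,x^i y^k : 0\le i\le m-1,\ k\in\{0,1\}\,\}$ of $R$, relying only on the multiplication rule $(a\overline{g})(b\overline{h})=ab\,\alpha(g,h)\,\overline{gh}$ together with the defining relations $x^m=y^2=1$, $yx^a=x^{m-a}y$ of $D_{2m}$ and the piecewise cocycle $\alpha$. The first step is to tabulate, once and for all (indices read modulo $m$), the four products that can occur: $x^i\cdot x^j=x^{i+j}$ because $\alpha(x^i,x^j)=1$; $x^i\cdot(x^j y)=x^{i+j}y$ because $\alpha(x^i,x^j y)=1$; $(x^j y)\cdot x^i = t^{\,i}\,x^{j-i}y$ because $(x^j y)x^i=x^j(yx^i)=x^{j-i}y$ and $\alpha(x^j y,x^i y^0)=t^{\,i}$; and $(x^i y)\cdot(x^j y)=t^{\,j}\,x^{i-j}$ because $(x^i y)(x^j y)=x^{i-j}$ and $\alpha(x^i y,x^j y)=t^{\,j}$. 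I would also record that the adjoint $(-)^{*}$ merely multiplies the coefficient of $x^i y^k$ by $t^{-i}$.

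With this table in hand the first bullet is immediate: on $R_1=KC_m$ the cocycle is identically $1$, so the product $(\sum_i a_i x^i)(\sum_j b_j x^j)=\sum_{i,j}a_i b_j\,x^{i+j}$ coincides with that of the untwisted group algebra $K[C_m]$ and is visibly symmetric in the two factors; hence $R_1$ is commutative. (Note this does not follow from the earlier lemma, since $C_m$ need not lie in $Z(D_{2m})$, but the triviality of $\alpha$ on $C_m\times C_m$ suffices.)

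For the second and third bullets the decisive point is a \emph{cancellation}. When I expand $h_1 h_2^{*}$ the scalar $t^{\,j}$ produced by $\alpha(x^i y,x^j y)$ is exactly undone by the $t^{-j}$ introduced by $(-)^{*}$, and likewise when I expand $h_1 h_2$ or $h_2 h_1^{*}$ in the third bullet the scalar $t^{\,i}$ from $\alpha(x^j y,x^i)$ is cancelled by a $t^{-i}$ from $(-)^{*}$. Consequently each product appearing in the claimed equalities collapses to a clean double sum $\sum_{i,j}a_i b_j\,x^{\pm(i-j)}$, carrying a trailing $y$ in the mixed case of the third bullet (and, in the variant $h_1^{*}h_2=h_2^{*}h_1$, a surviving symmetric scalar $t^{\,j-i}$, since there the cocycle and the adjoint do not cancel completely). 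It then remains to match the two sides: e.g.\ $h_1 h_2^{*}=\sum_{i,j}a_i b_j x^{i-j}$ against $h_2 h_1^{*}=\sum_{i,j}a_i b_j x^{j-i}$, or $h_1 h_2=\sum_{i,j}a_i b_j x^{i+j}y$ against $h_2 h_1^{*}=\sum_{i,j}a_i b_j x^{j-i}y$. Each such matching follows from a change of summation index of the form $i\mapsto m-i$ (combined with $j\mapsto m-j$ in the second bullet): this is a bijection of $\{0,\dots,m-1\}$, it leaves the coefficients unchanged because $a_i=a_{m-i}$ and $b_j=b_{m-j}$ by membership in $A_1$ and $A_2$, and since $x^m=1$ it reverses the exponent of $x$, turning one side into the other; the surviving scalar $t^{\,j-i}$ in the $h_1^{*}h_2$ variant is simultaneously sent to $t^{\,i-j}$, consistently with that reversal.

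The part that requires care — and the only real obstacle — is the index bookkeeping rather than any conceptual difficulty: one must keep every exponent modulo $m$, check that the convention $r_0=r_m$ is consistent with the palindromic condition $r_i=r_{m-i}$ so that $i\mapsto m-i$ is a genuine involution of the index set, pick the correct branch of the piecewise definition of $\alpha$ when one argument is a pure rotation $x^i=x^i y^0$, and apply the relation in the right direction, $yx^a=x^{m-a}y$. Once these conventions are fixed, each of the three identities reduces to the one-line reindexing argument above (and one sees in passing that the primitivity of $t$ is irrelevant here — only $x^m=1$ and the scalar identities $t^{\,i}t^{-i}=1$ are used).
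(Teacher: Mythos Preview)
Your proof is correct, but note that the paper itself does not supply a proof of this proposition: it is quoted as a known result from the cited work of G\'omez Olvera, L\'opez Ramos, and Torrecillas Jover, so there is nothing in the present paper to compare against. Your direct basis computation --- tabulating the four products $x^i\cdot x^j$, $x^i\cdot(x^j y)$, $(x^j y)\cdot x^i$, $(x^i y)\cdot(x^j y)$, observing the cancellation $t^{\,j}t^{-j}=1$ between the cocycle value and the adjoint twist, and finishing with the reindexing $i\mapsto m-i$ (resp.\ $(i,j)\mapsto(m-i,m-j)$) using the palindromic constraint $r_i=r_{m-i}$ --- is exactly the intended argument and matches the computation in the original reference. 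One incidental observation: in the third bullet your reindexing uses only $a_i=a_{m-i}$, so in fact $h_1\in A_1$ suffices and the hypothesis $h_2\in A_2$ is not needed there; this is harmless for the proposition as stated.
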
 

Then the key exchange protocol will be:

Let $h \in R$ be a random public element. The key exchange between Alice and Bob proceeds as follows:

\begin{enumerate}
    \item Alice selects a secret pair $s_A = (g_1, k_1)$, where $g_1 \in R_1$, $k_1 \in A_2 \leq R_2$.
    
    \item Bob selects a secret pair $s_B = (g_2, k_2)$, where $g_2 \in R_1$, $k_2 \in A_2 \leq R_2$.
    
    \item Alice sends Bob the element $p_A = g_1 h k_1$, and Bob sends Alice $p_B = g_2 h k_2$.
    
    \item Alice computes the shared key
    \[
    K_A = g_1 p_B k_1^*,
    \]
    and Bob computes
    \[
    K_B = g_2 p_A k_2^*.
    \]
    
    \item Then $K_A = K_B$, and both parties share the same secret key.
\end{enumerate}

Under this situation, we have that $K=\mathbb{F}_{p^n}$, and that $R_1$ is a $\mathbb{F}_p$-vector space with commutiative basis 
\begin{equation}
    \{t^ix^j; i=0,\cdots, n-1, j=0,\cdots, m-1\}
\end{equation}
 and that $A_2$ is a $\mathbb{F}_p$-vector space with basis 
\begin{equation}
    \{t^i(x^j+x^{m-j}) ; i=0,\cdots, n-1, j=1,\cdots, \left\lfloor \frac{m-1} {2}\right\rfloor\} \cup \{ t^i ; i=0,\cdots, n-1,\}
\end{equation}
if $m$ is even, and 
\begin{equation}
    \{t^i(x^j+x^{m-j}) ; i=0,\cdots, n-1, j=1,\cdots, \left\lfloor \frac{m-2} {2}\right\rfloor\} \cup \{ t^i, t^ix^{m/2}; i=0,\cdots, n-1,\}
\end{equation}
if $m$ is odd. Therefore we can compute a linear system as in section \ref{sectio2} over the field $\mathbb{F}_p$, from which it is possible to obtain the key

\section{Conclusions}
We have cryptoanalyced some key exchange protocol based on two-side multiplication action. We have use this algorithm to obtain the share key in  the public key exchange proposed in \cite{huang2024} and \cite{Olvera19}. For the first one we have used the original ideas of \cite{sanchez2024} to the special case of digital sum, find a method to obtain the maximal solution of a linear system over such semiring, and for the last one we have use the properties of finite field to obtain a linear system over a finite field. 

\section*{Acknowledgments}
This research was supported by the Spanish Ministry of Science, Innovation and Universities under the FPU 2023 grant program.

\section*{ORCID}

\noindent Alvaro Otero Sanchez - \url{https://orcid.org/0009-0009-5613-2081}

\section*{References}

\end{document}